\newtheorem{theorem}{Theorem}
\newtheorem{lemma}[theorem]{Lemma}
\newtheorem{definition}[theorem]{Definition}
\newtheorem{example}[theorem]{Example}
\begin{document}
\title{Beyond Integral-Domain Stabilizer Codes} 


\author{%
  \IEEEauthorblockN{Lane G. Gunderman}
  \IEEEauthorblockA{Department of Electrical and Computer Engineering \\
                    University of Illinois Chicago\\
                   Chicago, Illinois, 60607\\
                    Email: lanegunderman@gmail.com}
}

\maketitle


\begin{abstract}
Quantum error-correcting codes aim to protect information in quantum systems to enable fault-tolerant quantum computations. The most prevalent method, stabilizer codes, has been well developed for many varieties of systems, however, largely the case of composite number of levels in the system has been avoided. This relative absence is due to the underlying ring theoretic tools required for analyzing such systems. Here we explore composite local-dimension quantum stabilizer codes, providing a pair of constructions for transforming known stabilizer codes into valid ones for composite local-dimensions. In addition remarks on logical encodings and the counts possible are discussed. This work lays out central methods for working with composite dimensional systems, enabling full use of the computational space of some systems, and expanding the understanding of the symplectic spaces involved.
\end{abstract}

Reliable quantum computation is expected to require quantum error-correction. The most promising approach is that of stabilizer codes--the quantum analog of classical additive codes \cite{gottesman1996class,gottesman1997stabilizer}. Traditionally stabilizer codes and quantum computations are performed on qubit systems where each qubit is comprised of a two-level system. There has been a fair amount of research done considering cases where the system has a prime number of levels, or prime-power number of levels, one survey includes \cite{ketkar2006nonbinary}. Likewise there has been a growing effort to examine bosonic (infinitely many leveled) encodings, as these encodings more closely match the nature of certain physical systems used in building quantum computers \cite{terhal2020towards,michael2016new,albert2022bosonic}. Unfortunately, the understanding of composite local-dimension systems (where there are a composite number of levels in the system), for instance a $^{173}\text{Yb}$ spin-$\frac{5}{2}$ device \cite{yang2024minute}, has been underdeveloped. In addition, in bosonic error-correction encodings which result in finite logical subspaces within the infinite system have recently been developed \cite{vuillot2024homological,xu2024letting}. These encodings utilize operations with finite order even within infinite spaces--such as operations acting as multiples of rotations by $\frac{\pi}{3}$, which has additive order of factors of $6$ while protecting against broader types of errors. For these reasons, this work develops constructions for composite local-dimension codes using existing stabilizer codes leveraging the local-dimension-invariant (LDI) framework, and connects with these bosonic encodings to provide a broader unifying framework to approach these. In short, this work explores systems whose local-dimension is best described by a ring that is not an integral-domain.

\section{Definitions and the local-dimension-invariant framework}



A commutative field, in this work denoted by $\mathcal{F}$, is equipped with two invertible operations: addition and multiplication. A commutative ring, in this work denoted by $\mathcal{R}$, is equipped with an invertible addition operation and a multiplication operation, but not all elements have multiplicative inverses. In this work we primarily focus on the situation where a field of fractions cannot be identified, and so extending beyond integral-domains, as the integral-domain case has already been studied \cite{gunderman2024stabilizer}. As this work extends to cases beyond qubits as well as beyond qudits (typically prime number of bases), we will use the more general term \textit{registers} in place of these terms \cite{watrous2018theory}, although as we will see later these registers may be hard to identify due to not being a discrete number of registers.

\subsection{Definitions for Stabilizer Codes}

In the discrete case with finite local-dimension $q$, the traditional qudit Pauli operators are generated by:
\begin{equation}
    X|j\rangle=|(j+1)\mod q\rangle,\quad Z|j\rangle=\omega_q^j |j\rangle,\quad \omega_q=e^{2\pi i/q}.
\end{equation}
We denote by $\mathbb{P}_q$ the single qudit Pauli operator group generated by the powers of the operators $X$ and $Z$, while $\mathbb{P}_q^n$ indicates a tensor product of $n$ such operators. In this work we will be varying the local-dimension.

\begin{definition}
A stabilizer code, specified by its $n-k$ commuting generators acting on $n$ registers, is characterized by the following set of parameters:
\begin{itemize}
\item $n$: the number of (physical) registers that are used to protect the information.
\item $k$: the number of encoded (logical) registers.
\item $d$: the distance of the code, given by the lowest weight of an undetectable generalized Pauli error. An undetectable generalized Pauli error is an $n$-qudit Pauli operator which commutes with all elements of the stabilizer group, but is not in the group itself.
\end{itemize}
These values are specified for a particular code as $[[n,k,d]]_q$, where $q$ is the local-dimension of the registers.
\end{definition}

In some places the parameter $K$ is discussed in place of $k$ which is given by the total number of orthonormal bases for the logical subspace.

Working with tensors of Pauli operators can be challenging, and so we make use of the following well-known mapping from these to vectors in symplectic spaces, following the notation from \cite{gunderman2020local}. This representation is often times called the symplectic representation for the operators \cite{lidar2013quantum,ketkar2006nonbinary}, but we use this notation instead to allow for greater flexibility, particularly in specifying the local-dimension of the mapping. This symplectic linear algebraic representation will be used for our proofs.

\begin{definition}[$\phi$ representation of a qudit operator]
We define the linear surjective map: 
\begin{equation}
\phi_q: \mathbb{P}_q^n\mapsto \mathbb{Z}_q^{2n}
\end{equation}
which carries an $n$-register Pauli in $\mathbb{P}_q^n$ to a $2n$-entried symplectic vector of integers mod $q$, where we define this mapping by:
\begin{equation}
I^{\otimes i-1} X^a Z^b I^{\otimes n-i} \mapsto \left( 0^{i-1}\ a\ 0^{n-i} \middle\vert 0^{i-1}\ b\ 0^{n-i}\right),
\end{equation}
which puts the power of the $i$-th $X$ operator in the $i$-th position and the power of the $i$-th $Z$ operator in the $(n+i)$-th position of the output vector. This mapping is defined as a homomorphism with: $\phi_q(s_1 s_2)=\phi_q(s_1)\oplus \phi_q(s_2)$, where $\oplus$ is component-wise addition mod $q$. We denote the first half of the vector as $\phi_{q,x}$ and the second half as $\phi_{q,z}$.
\end{definition}



We may invert the map to return to the original $n$-register Pauli operator with the phase of each operator being undetermined. We make note of a special case of the $\phi$ representation:

\begin{definition}
Let $q$ be the dimension of the initial system. Then we denote by $\phi_\infty$ the mapping:
\begin{equation}
    \phi_\infty:  \mathbb{P}_\infty^n\mapsto \mathbb{Z}^{2n}
\end{equation}
where operations are no longer  taken modulo some base, but instead carried over the full set of integers.
\end{definition}

$\phi_\infty$ allows us to avoid being dependent on the local-dimension of our system when working with our code. The ability to still define $\phi_\infty$ as a homomorphism, upon quotienting out the leading phase of the operators, (and with the same mapping) is a portion of the results of \cite{gunderman2020local,gunderman2024stabilizer}. 

Determining whether two operators commute in this picture is given by the following definition:
\begin{definition}
Let $s_i,s_j$ be two qudit Pauli operators over $q$ bases, then these commute if and only if:
\begin{equation}
\phi_q(s_i)\odot \phi_q(s_j)\equiv 0 \pmod q
\end{equation}
where $\odot$ is the symplectic product, defined by:
\begin{multline}
\phi_q(s_i)\odot \phi_q(s_j) \\ =\oplus_{k=1}^n [\phi_{q,z}(s_j)_k\cdot  \phi_{q,x}(s_i)_k- \phi_{q,x}(s_j)_k \cdot \phi_{q,z}(s_i)_k]
\end{multline}
where $\cdot$ is standard integer multiplication $\mod q$ and $\oplus$ is addition $\mod q$.
\end{definition}


In this work, we will assume that a stabilizer code is prepared in canonical form, given by $[I_{n-k}\ X_2\ |\ Z_1\ Z_2]$, where $X_2$ is a $(n-k)\times k$ block from putting the code into this form, while $Z_1$ is a $(n-k)\times (n-k)$ block and $Z_2$ is a $(n-k)\times k$ block also resulting from putting $\mathcal{S}$ into canonical form. So long as the local-dimension is a prime number, this can be done, although not uniquely per se \cite{gottesman1997stabilizer}. These form our basic tools for analyzing stabilizer codes.

\subsection{Local-dimension-invariant Codes}

Before delving into new results we will provide a brief summary of prior results for our primary tool of local-dimension-invariant stabilizer codes as introduced in \cite{gunderman2020local}. 

\begin{definition}
A stabilizer code $\mathcal{S}$ is called \textbf{local-dimension-invariant} iff:
\begin{equation}
    \phi_\infty(s_i)\odot \phi_\infty(s_j)=0,\quad \forall s_i, s_j\in \mathcal{S}
\end{equation}
\end{definition}
Importantly this requires \textit{exact} orthogonality of the vectors under the symplectic inner product. The following theorem from \cite{gunderman2024stabilizer} provided a prescriptive method for transforming any given stabilizer code defined over a finite field into an equivalent set of generators that define a stabilizer code for any choice of a finite, commutative ring:

\begin{theorem}\label{fldi}
Let $\mathcal{S}$ be a stabilizer code with parameters $[[n,k]]_{\mathcal{F}}$ for local-dimension given by finite field $\mathcal{F}$, with $r$ linearly independent generators in the $\phi$ representation. Then there is a prescriptive method to transform $\mathcal{S}$ into a local-dimension-invariant form still with parameters $[[n,n-r]]_{\mathcal{R}}$ for any choice of a finite, commutative ring $\mathcal{R}$.
\end{theorem}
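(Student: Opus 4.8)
\textit{Proof idea.} The plan is to write down the prescriptive transformation explicitly and then observe that the resulting integer generator matrix descends to \emph{every} finite commutative ring simultaneously. First I would put $\mathcal{S}$ into the canonical form $[\,I_{r}\ \ X_2\ \mid\ Z_1\ \ Z_2\,]$ over $\mathcal{F}$ by Gaussian elimination (for a prime field this is exactly the canonical form recalled above; for a prime-power field one first passes to the prime-subfield representation, which does not change the argument). Lifting each entry to an integer representative and writing the $i$-th of the $r$ generators in the $\phi_\infty$ picture, its $X$-part is $(\,e_i\ \mid\ a_i\,)$ and its $Z$-part is $(\,b_i\ \mid\ c_i\,)$, with $e_i$ the $i$-th standard basis vector of length $r$, $a_i\in\mathbb{Z}^{k}$ the $i$-th row of $X_2$, $b_i\in\mathbb{Z}^{r}$ the $i$-th row of $Z_1$, and $c_i\in\mathbb{Z}^{k}$ the $i$-th row of $Z_2$. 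Since the code is abelian over $\mathcal{F}$, the integer symplectic products
\begin{equation}
M_{ij}\ :=\ \phi_\infty(s_i)\odot\phi_\infty(s_j)\ =\ (b_j)_i-(b_i)_j+\langle c_j,a_i\rangle-\langle a_j,c_i\rangle
\end{equation}
are all divisible by $q=|\mathcal{F}|$, with $M_{ij}=-M_{ji}$ and $M_{ii}=0$.

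The structural observation that makes everything work is that the identity block $I_r$ in the $X$-part decouples the correction terms: coordinate $j$ of the $X$-part of any generator $s_\ell$ equals $\delta_{\ell j}$, so the $Z_1$-entry $(b_i)_j$ enters only the symplectic products $\phi_\infty(s_i)\odot\phi_\infty(s_j)$ and $\phi_\infty(s_j)\odot\phi_\infty(s_i)$ and no others. Hence I would make a single pass over the strict upper triangle of $Z_1$, replacing $(b_i)_j$ by $(b_i)_j+M_{ij}$ for each $i<j$. Because $q\mid M_{ij}$, this change is invisible modulo $q$ (so the corrected generators still reduce to $\mathcal{S}$ there) and leaves $X_2$, $I_r$, and $Z_2$ untouched; by the decoupling, the replacement zeroes $\phi_\infty(s_i)\odot\phi_\infty(s_j)$ and, by antisymmetry, $\phi_\infty(s_j)\odot\phi_\infty(s_i)$, without disturbing any other pair. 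After the pass, $\phi_\infty(s_i)\odot\phi_\infty(s_j)=0$ for all $i,j$ exactly over $\mathbb{Z}$ (self-products vanish automatically, being differences of equal terms over a commutative ring), so $\mathcal{S}$ is now local-dimension-invariant in the sense of the LDI framework \cite{gunderman2020local}.

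Finally I would descend to an arbitrary finite commutative ring $\mathcal{R}$ via the unique homomorphism $\mathbb{Z}\to\mathcal{R}$: the integer generator matrix maps to one with entries in $\mathcal{R}$, and since the symplectic products are literally $0$ in $\mathbb{Z}$ they remain $0$ after reduction, so by the commutation criterion the images generate an abelian (hence isotropic, since self-products are forced to $0$) subgroup of the appropriate Pauli group over $\mathcal{R}$, i.e.\ a stabilizer code. The number of physical registers is still $n$; and because the block $I_r$ survives, distinct $\mathcal{R}$-linear combinations of the $r$ generators differ already in their first $r$ $X$-coordinates, so the stabilizer group has exactly $|\mathcal{R}|^{r}$ elements and the code encodes $k=n-r$ logical registers, giving parameters $[[n,n-r]]_{\mathcal{R}}$. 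I expect the main obstacle to be exactly the simultaneity of the corrections -- one must be certain that zeroing one pairwise defect does not recreate another -- and this is precisely what the identity block in the canonical form buys us; a secondary point is that the logical count must be obtained combinatorially from the surviving identity block rather than from a rank argument, since $\mathcal{R}$ may have zero divisors.
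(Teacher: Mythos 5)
Your proposal is correct and is essentially the paper's own prescriptive method: put the code into canonical form $[\,I_r\ X_2\mid Z_1\ Z_2\,]$ and add to $Z_1$ the (lower- or, equivalently by antisymmetry, upper-) triangular matrix of integer symplectic products, the identity block guaranteeing both the decoupling of the corrections and the unchanged generator count. The only quibble is your closing count of $|\mathcal{R}|^r$ stabilizer elements, which holds for $\mathcal{R}=\mathbb{Z}_Q$ but should really be $c^r$ with $c$ the additive order of $1_{\mathcal{R}}$, since the group is generated by integer powers of the generators rather than arbitrary $\mathcal{R}$-linear combinations.
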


We will delve into this prescriptive method later, as this provides our first method for constructing composite local-dimension codes. The prior only ensures that there are generators for the code over these local-dimensions. In order to ensure that the distance of the quantum error-correcting code of the LDI representation is at least as large as that of the original stabilizer code, we recall the theorem from \cite{gunderman2020local}:
\begin{theorem}\label{ogproof}
For all primes $p>p^*$, with $p^*$ a cutoff value greater than $q$, the distance of an LDI form for a stabilizer code $[[n,k,d]]_q$ used for a system with $p$ bases, $[[n,k,d']]_p$, has $d'\geq d$.
\end{theorem}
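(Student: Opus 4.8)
The plan is to argue by contradiction: assume $d'<d$ and manufacture, over $\mathbb{Z}_q$, an undetectable non-stabilizer Pauli of weight $<d$, contradicting the distance of the original code. First I would fix the LDI generators produced by Theorem~\ref{fldi} in canonical form $\tilde S=[\,I_{n-k}\ \tilde X_2\mid \tilde Z_1\ \tilde Z_2\,]$ over $\mathbb{Z}$; by construction these reduce modulo $q$ to a canonical generating set of the original $[[n,k,d]]_q$ code, and by the LDI property the rows of $\tilde S$ are pairwise \emph{exactly} symplectically orthogonal over $\mathbb{Z}$. I would then write down the induced canonical logical generators, e.g.\ $\phi_\infty(\bar Z_j)=(0\ 0\mid -\tilde X_2^{\mathsf T}\ I_k)$ and $\phi_\infty(\bar X_j)=(0\ I_k\mid \tilde Z_2^{\mathsf T}\ 0)$ (up to the usual normalization of the symplectic pairing), and check that these too commute exactly over $\mathbb{Z}$ with $\tilde S$. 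Let $G$ bound the absolute value of every entry occurring in $\tilde S$ and in these logical generators; $G$ depends only on the code and the chosen LDI lift.

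Next I would pass from $\mathbb{Z}_p$ back to $\mathbb{Z}$. Let $E$ be a minimum-weight undetectable error over $\mathbb{Z}_p$, with support $T$, $|T|=d'\le d-1$. Taking the cutoff $p^*$ to exceed the Smith invariant factors of the $2n\times(n+k)$ integer matrix whose columns are the $\phi_\infty(\tilde s_i)$, $\phi_\infty(\bar X_j)$, $\phi_\infty(\bar Z_j)$, for $p>p^*$ these reductions form an $\mathbb{F}_p$-basis of the centralizer, so $\phi_p(E)$ decomposes in this basis with logical part $(\vec\beta,\vec\gamma)\not\equiv 0\pmod p$; extracting coefficients via the symplectic form, a minimal-entry integer lift $v\in\mathbb{Z}^{2n}$ of $\phi_p(E)$, still supported on $T$, satisfies $v\odot\phi_\infty(\tilde s_i)\equiv 0$ for all $i$ while $\big(v\odot\phi_\infty(\bar X_j),\,v\odot\phi_\infty(\bar Z_j)\big)\not\equiv 0$ modulo $p$. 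Encoding this in the integer matrix $\Theta_T$ (columns indexed by $T$; rows the symplectic pairings against the $\tilde s_i$, $\bar X_j$, $\bar Z_j$), the statement is that the $\mathbb{F}_p$-column span of $\Theta_T$ meets the logical coordinate subspace $\{0\}^{n-k}\oplus\mathbb{F}_p^{2k}$ nontrivially.

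Then I would use genericity to pull this witness back to $\mathbb{Z}$. Enlarging $p^*$ once more to avoid the finitely many primes dividing relevant maximal minors of the matrices $\Theta_T$ over all $T$ with $|T|<d$, the $\mathbb{F}_p$-column span behaves exactly as the $\mathbb{Q}$-column span; hence the $\mathbb{Q}$-column span of $\Theta_T$ meets $\{0\}^{n-k}\oplus\mathbb{Q}^{2k}$ nontrivially, and clearing denominators yields a primitive integer vector $u$ supported on $T$, commuting exactly over $\mathbb{Z}$ with every $\tilde s_i$, with nonzero integer logical content $\xi\in\mathbb{Z}^{2k}\setminus\{0\}$. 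That is, $d'<d$ forces the LDI code to possess an \emph{exact} ($\phi_\infty$-level) logical operator of weight $<d$. Reducing $u$ modulo $q$: if $\xi\not\equiv 0\pmod q$, then $u\bmod q$ is a weight-$<d$ undetectable non-stabilizer error of the original $\mathbb{Z}_q$ code, the desired contradiction.

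The hard part will be the remaining case $\xi\equiv 0\pmod q$, where $u\bmod q$ collapses into the stabilizer group. Here the plan is a $q$-adic descent: replace $u$ by $q^{-1}\big(u-\sum_i\hat\lambda_i\phi_\infty(\tilde s_i)\big)$ for a suitable integer lift $\hat\lambda$ of the stabilizer coefficients read off the canonical form; because the $\tilde s_i$ are exactly self-orthogonal and exactly commute with the logicals, the result is again an exact logical, now with logical content $\xi/q$, and after finitely many steps one reaches a witness whose logical content is a $q$-unit, giving the contradiction. The obstacle — and the reason $p^*$ must be taken large relative to $q$, $G$, $n$, $k$, and $d$, rather than merely larger than $q$ in a trivial way — is controlling the \emph{weight} along this descent: a priori $q^{-1}(u-\sum_i\hat\lambda_i\phi_\infty(\tilde s_i))$ ranges over a coset of the integer stabilizer lattice and could be dense. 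I expect the cleanest route is to re-sparsify after each division by $q$ using the $I_{n-k}$ block of the canonical form to clear the first $n-k$ symplectic $X$-coordinates, together with a quantitative bound (Siegel/Hermite-type) showing that within each such coset a representative of weight $<d$ can be chosen; alternatively one can argue directly from the explicit form of the LDI corrections (strictly triangular multiples of $q$ added to $\tilde Z_1$) that a minimum-weight exact logical of the LDI code cannot have all of its logical content divisible by $q$. Establishing this weight control under descent is the single step I expect to require the most care, and it is precisely where the value of the cutoff $p^*$ enters.
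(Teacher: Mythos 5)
The paper does not actually reprove this theorem; it recalls it from \cite{gunderman2020local}, and the intended argument (mirrored in the paper's own proof of Theorem~\ref{pdist}) is far more direct than yours: a weight-$<d$ error that is undetectable over $\mathbb{Z}_p$ is either an \emph{unavoidable} error (all syndromes exactly zero over $\mathbb{Z}$), which is already accounted for by the base-$q$ code, or an \emph{artifact} error (some syndrome a nonzero multiple of $p$), and the latter are excluded for $p>p^*$ because a matrix minor of $\phi_\infty(\mathcal{S}_\infty)$ that is full rank modulo $q$ can only lose rank modulo the finitely many primes dividing its (bounded) determinantal divisors. Your proposal takes a genuinely different route---contradiction, pullback of an $\mathbb{F}_p$-witness to a $\mathbb{Q}$-witness supported on $T$, then reduction modulo $q$---but it is not complete, and the incompleteness is not peripheral.

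The problem is that your case split is degenerate: the branch $\xi\not\equiv 0\pmod q$, the only place you actually reach a contradiction, can never occur. Any integer vector $u$ supported on $T$ with $|T|<d$ that commutes exactly with every $\tilde s_i$ reduces modulo $q$ to an undetectable error of weight $<d$ for the original code; by the hypothesis that that code has distance $d$, the reduction lies in the stabilizer group, hence pairs trivially with every logical operator, hence $\xi\equiv 0\pmod q$ automatically. So the entire burden of the proof falls on the $q$-adic descent, which you acknowledge you cannot close: subtracting $\sum_i\hat\lambda_i\phi_\infty(\tilde s_i)$ and dividing by $q$ yields a vector whose support is no longer contained in $T$ (the $\tilde s_i$ generically have large support), clearing coordinates via the $I_{n-k}$ block does not restore that support, and there is no reason a weight-$<d$ representative exists in the resulting coset. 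Without that weight control the contradiction with the original distance is never obtained. The repair is essentially to abandon the descent and argue in the forward direction as in \cite{gunderman2020local}: show that for $p>p^*$ no \emph{new} weight-$<d$ vectors enter the kernel modulo $p$ (the unavoidable/artifact dichotomy of Definitions~\ref{unavoidable} and~\ref{artifact} plus preservation of the rank of minors), rather than trying to push a $p$-witness all the way down to a $q$-witness.
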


This only holds for prime values, however, it will form a crucial element in our distance preservation condition in the composite case. For this result about the distance of the code one breaks down the set of undetectable errors into two sets. These definitions highlight the subtle possibility of the distance reducing upon changing the local-dimension and will be leveraged later in the primary theorem's proof.

\begin{definition}\label{unavoidable}
An \textbf{unavoidable error} is an error that commutes with all stabilizers and produces the $\vec{0}$ syndrome over the integers.
\end{definition}


The other possible kind of undetectable error for a given number of bases corresponds to the case where some syndromes are multiples of the local-dimension:

\begin{definition}\label{artifact}
An \textbf{artifact error} is an error that commutes with all stabilizers but produces at least one syndrome that is only zero modulo the base.
\end{definition}

With these tools we can now show our constructions and remarks on features of composite local-dimension codes. Importantly, the LDI framework is not required to generate composite codes, however, it provides initial constructions.




\section{Composites, Factors of the Reals Modulo a Value, and Integer Multiples}


We now turn to using these definitions to provide construction methods for codes operating over more general commutative rings. We focus here on cases of more physical importance: systems that have a composite number of levels, such as $6$-levels, as well as checks which correspond to multiples of a rational fraction of $2\pi$\footnote{For instance, multiples of rotations by $\frac{\pi}{3}$ is isomorphic to $\mathbb{Z}_6$, and generally $\frac{2\pi}{r}$ for some integer $r$ is isomorphic to $\mathbb{Z}_r$, although can correct for errors within $\mathbb{R}_{2\pi/r}$.}, and systems operating within fixed multiplicity subspaces, such as only occupying excitations which are a multiple of $3$ ($3\mathbb{Z}$). We will first focus on the composite case.

The perhaps simplest way to convert a known stabilizer code into a code with a composite local-dimension is to directly apply the prescriptive method from Theorem \ref{fldi}. This Theorem is accomplished by putting the code into canonical form ($\begin{bmatrix} I_r & X_2&|\ & Z_1 & Z_2 \end{bmatrix}$) and adding to the $Z_1$ block the lower-triangular matrix of symplectic products between generators, taken over the integers. Manifestly each row has additive order of the local-dimension in this case. Further, the number of generators is unchanged from the initial code so it transforms an $[[n,k]]_q$ code into an $[[n,k]]_Q$ code with $Q$ being a chosen composite number. This serves as our first construction.

The following lemma provides a method for determining the dimension of the logical subspace.


\begin{lemma}\label{ldim}
Let $Q$ be a composite, square-free number. Then the logical subspace, $K$, has dimension $Q^n/\prod_{i=1}^m p_i^{\bar{k}_i}$, where there are $\bar{k}_i$ many independent generators with additive order $p_i$.
\end{lemma}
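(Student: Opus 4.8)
The plan is to use the Chinese Remainder Theorem to split the square-free composite problem into a product of prime-dimensional problems, where the logical count is classical. Since $Q=\prod_{i=1}^m p_i$ with the $p_i$ distinct primes, the ring isomorphism $\mathbb{Z}_Q\cong\bigoplus_{i=1}^m\mathbb{Z}_{p_i}$ on the index set of a single register induces a tensor factorization $\mathbb{C}^Q\cong\bigotimes_{i=1}^m\mathbb{C}^{p_i}$ under which $X_Q$ and $Z_Q$ become products of the prime-level operators $X_{p_i}$ and $Z_{p_i}$ without any extra phase. Hence the full Hilbert space factors as $\bigotimes_{i=1}^m(\mathbb{C}^{p_i})^{\otimes n}$ and the $n$-register Pauli group (modulo leading phases) as $\prod_{i=1}^m\mathbb{P}_{p_i}^n$. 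In the $\phi$ picture this is nothing but the module isomorphism $\mathbb{Z}_Q^{2n}\cong\bigoplus_{i=1}^m\mathbb{Z}_{p_i}^{2n}$ given by reduction mod each $p_i$, which carries $\phi_Q$ to the tuple $(\phi_{p_1},\dots,\phi_{p_m})$ and respects the symplectic product $\odot$ componentwise.

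Next I would push the stabilizer group through this decomposition. Let $\mathcal{S}_i$ denote the image of $\mathcal{S}$ under reduction mod $p_i$. Reducing the commutation condition $\phi_Q(s)\odot\phi_Q(t)\equiv 0\pmod Q$ modulo $p_i$ shows each $\mathcal{S}_i$ is a valid stabilizer group over the prime $p_i$, so $\phi_{p_i}(\mathcal{S})$ is a $\mathbb{Z}_{p_i}$-subspace of $\mathbb{Z}_{p_i}^{2n}$ of some dimension. A generator of $\mathcal{S}$ has additive order divisible by $p_i$ exactly when its $p_i$-reduction is nonzero, and a short computation shows a generator of additive order \emph{exactly} $p_i$ has $\phi_{p_j}$-component zero for every $j\neq i$, i.e. it is supported on the $i$-th tensor factor alone. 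Because the $\mathcal{S}_i$ have pairwise coprime orders, the structure theorem for finite abelian groups gives $\phi_Q(\mathcal{S})=\bigoplus_{i=1}^m\phi_{p_i}(\mathcal{S})$; choosing a generating set of $\mathcal{S}$ adapted to this primary decomposition, there are exactly $\bar{k}_i$ generators of additive order $p_i$, they are $\mathbb{Z}_{p_i}$-linearly independent, and $\dim_{\mathbb{Z}_{p_i}}\phi_{p_i}(\mathcal{S})=\bar{k}_i$. In particular $\bar{k}_i$ is intrinsic to $\mathcal{S}$, and $\mathcal{S}_i$ is a stabilizer group on $n$ registers of dimension $p_i$ with $\bar{k}_i$ independent generators.

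Finally, the common $+1$-eigenspace factorizes as a tensor product of the prime-level code spaces, $K\cong\bigotimes_{i=1}^m\left((\mathbb{C}^{p_i})^{\otimes n}\right)^{\mathcal{S}_i}$, and by the standard prime-dimensional stabilizer count each factor has dimension $p_i^{\,n}/p_i^{\bar{k}_i}=p_i^{\,n-\bar{k}_i}$. Multiplying over $i$ gives $\dim K=\prod_{i=1}^m p_i^{\,n-\bar{k}_i}=\left(\prod_{i=1}^m p_i^{\,n}\right)\big/\prod_{i=1}^m p_i^{\bar{k}_i}=Q^n\big/\prod_{i=1}^m p_i^{\bar{k}_i}$, as claimed.

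The step I expect to be the main obstacle is the middle one: verifying that $\mathcal{S}$ really does factor as the direct product of its prime reductions — that no generator secretly links two distinct prime sectors in a way that alters the count — together with the leading-phase bookkeeping, since $\phi$ is only a homomorphism after quotienting phases. Both are handled using that the prime sectors have pairwise coprime orders, together with the fact that for a phase-free abelian stabilizer group the code dimension equals $Q^n$ divided by the group order $|\phi_Q(\mathcal{S})|=\prod_{i=1}^m p_i^{\bar{k}_i}$. The square-freeness of $Q$ is exactly what makes every additive order a product of distinct primes and the sectors mutually coprime, and is where the argument would need modification for general composite $Q$.
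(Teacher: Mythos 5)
Your proof is correct, but it takes a genuinely different route from the paper's. The paper disposes of this in one step: it computes $\dim K=\operatorname{Tr}\Pi$ for the codespace projector $\Pi=\tfrac{1}{|\mathcal{S}|}\sum_{s\in\mathcal{S}}s$, notes that every non-identity generalized Pauli is traceless so $\operatorname{Tr}\Pi=Q^n/|\mathcal{S}|$, and observes that $|\mathcal{S}|=\prod_i p_i^{\bar{k}_i}$ because the only change from the prime case is that independent generators may have different additive orders. You instead run the Chinese Remainder Theorem through the whole structure --- Hilbert space, Pauli group, symplectic module --- and reduce to $m$ independent prime-dimensional stabilizer counts via the primary decomposition of $\phi_Q(\mathcal{S})$. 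Both arguments are sound (and you correctly fall back on the projector/group-order identity to handle the leading-phase bookkeeping, which is essentially the paper's whole proof smuggled in as a lemma). What your route buys is more structural information: it makes $\bar{k}_i$ intrinsic as $\dim_{\mathbb{Z}_{p_i}}\phi_{p_i}(\mathcal{S})$, disambiguating the lemma statement when a generating set contains elements of composite additive order, and it exhibits the codespace as a tensor product of prime-sector codespaces, which is exactly the picture underlying the paper's later pick-and-mix construction. What the paper's route buys is brevity and slightly greater generality: the trace computation needs only that the generators are independent, not the square-freeness of $Q$ (which your primary-decomposition step does use, as you note).
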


This result follows from the standard proof of the size of the logical subspace whereby the dimension is computed by tracing over the codespace projector, with the only difference here being that some generators may have different order. Of particular note in the above is that $K$ will always be a positive integer, while the number of logical registers $k=\log_Q(K)$ will not generally be. One can alter the code so that $k$ is a non-negative integer by appending logical operators of the different orders to the generators so that the denominator becomes $\prod_{i=1}^m p_i^{\max_i \bar{k}_i}$, effectively gauging out the extra logical bases so that the information is encoded within $Q$-level registers exclusively.

Now that we have a construction method and an expression for the size of the logical subspace, we now provide conditions on the distance of the code based on the distance of the original code. The following Theorem provides such a sufficient condition for ensuring the distance is at least kept.


\begin{theorem}[Composite case]\label{pdist}
Let $\mathcal{S}_\infty$ be an $[[n,k,d]]_q$ that has been put into an LDI form. Let $B$ be the largest (in absolute value) entry in $\mathcal{S}_\infty$. We may transform an $[[n,k,d]]_q$ code into an $[[n,k,d']]_Q$ code, with $d'\geq d$, in the following cases: i) for $q\mid Q$ and each prime factor of $(Q/q)$ exceeding $B$, or ii) each prime factor of $Q$ is greater than a cutoff value $p^*$.
\end{theorem}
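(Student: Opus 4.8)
The plan is to reduce the composite case to the prime case already established in Theorem~\ref{ogproof} by using the Chinese Remainder Theorem on the ring $\mathbb{Z}_Q$. First I would write $Q = \prod_{i=1}^{m} p_i^{a_i}$ and recall that $\mathbb{Z}_Q \cong \prod_i \mathbb{Z}_{p_i^{a_i}}$, so that the symplectic space $\mathbb{Z}_Q^{2n}$ factors as a direct product of the spaces $\mathbb{Z}_{p_i^{a_i}}^{2n}$, and the symplectic inner product respects this factorization componentwise. Since $\mathcal{S}_\infty$ is in LDI form, its generators are exactly orthogonal over $\mathbb{Z}$, hence orthogonal modulo each $p_i^{a_i}$; so $\mathcal{S}_\infty$ reduces to a valid stabilizer code over each factor ring, and an undetectable error over $\mathbb{Z}_Q$ projects to an undetectable error (or an element of the stabilizer) over each $\mathbb{Z}_{p_i^{a_i}}$. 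The key point is that the weight of a Pauli operator over $\mathbb{Z}_Q$ is at least the weight of each of its CRT components, because a coordinate that vanishes mod $Q$ vanishes mod every $p_i^{a_i}$ — so a low-weight undetectable $Q$-ary error would force a low-weight undetectable $p_i^{a_i}$-ary error for every $i$.

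Next I would handle the two cases separately. For case~(ii), each prime factor $p_i$ of $Q$ exceeds the cutoff $p^*$; I would first argue the distance does not drop when passing to $\mathbb{Z}_{p_i}$ (this is precisely Theorem~\ref{ogproof}), and then argue it cannot drop further passing from $\mathbb{Z}_{p_i}$ to $\mathbb{Z}_{p_i^{a_i}}$ — here the point is that reduction $\mathbb{Z}_{p_i^{a_i}} \to \mathbb{Z}_{p_i}$ sends nonzero entries to possibly-zero entries, so an undetectable error of weight $<d$ mod $p_i^{a_i}$ would need its mod-$p_i$ reduction to be in the stabilizer; I would then lift using the LDI structure (entries over $\mathbb{Z}$, not just mod $p_i$) to contradict Theorem~\ref{ogproof}, distinguishing unavoidable errors (Definition~\ref{unavoidable}) from artifact errors (Definition~\ref{artifact}) exactly as in the original proof. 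For case~(i), where $q \mid Q$ and each prime factor of $Q/q$ exceeds $B$: over the $\mathbb{Z}_q$-component the code is literally the original $[[n,k,d]]_q$ code, so its distance is $d$; over each prime-power component $p_i^{a_i} \mid (Q/q)$ with $p_i > B$, the bound $B$ on the largest entry guarantees that no row of $\mathcal{S}_\infty$ can "wrap" modulo $p_i$, so all syndromes that vanish mod $p_i$ vanish over $\mathbb{Z}$, meaning every undetectable error over $\mathbb{Z}_{p_i}$ is unavoidable in the sense of Definition~\ref{unavoidable} and hence has weight $\geq d$; the artifact-error case is killed outright by the $B < p_i$ hypothesis.

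Finally I would combine: an undetectable $Q$-ary error $E$ of weight $w$ produces, via CRT, an undetectable (or stabilizer) component in each factor; since $\mathcal{S}_\infty$ reduces to the original code on the $q$-part and $E$ is genuinely outside the stabilizer somewhere, $E$ is nontrivial on at least one factor where the distance bound $\geq d$ applies, and the weight inequality $w \geq \mathrm{wt}(E \bmod p_i^{a_i})$ gives $w \geq d$. The main obstacle I anticipate is the bookkeeping around "undetectable but in the stabilizer on some factors": I need to make sure that the CRT component on the offending factor is genuinely an undetectable error there and not a stabilizer element, which requires tracking that $E \notin \mathcal{S}$ over $\mathbb{Z}_Q$ forces $E \notin \mathcal{S}$ over at least one $\mathbb{Z}_{p_i^{a_i}}$ — and then ensuring that factor is one where the distance-preservation hypothesis (exceeding $B$, or exceeding $p^*$, or being the original $q$) actually holds. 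Handling the artifact-versus-unavoidable dichotomy uniformly across the mixed factorization in case~(i) is the delicate part, but the bound $B$ is exactly what is needed to rule out artifacts on the new prime factors.
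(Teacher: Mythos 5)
Your overall strategy---split $\mathbb{Z}_Q$ by the Chinese Remainder Theorem, bound the distance on each factor, and use the facts that an undetectable error mod $Q$ must lie outside the stabilizer on at least one factor and that weight can only decrease under reduction---is a genuinely different route from the paper's, which works directly with matrix minors of the check matrix as modules over $\mathbb{Z}_Q$ and asks when their rank can drop. For case~(ii) your route is sound and, if anything, makes the ``not in the stabilizer on some factor'' bookkeeping more explicit than the paper does: every prime factor of $Q$ exceeds $p^*$, so Theorem~\ref{ogproof} applies componentwise, which is the same substance as the paper's per-prime rank-preservation argument.

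The genuine gap is in case~(i). You assert that $p_i > B$ forces every syndrome that vanishes mod $p_i$ to vanish over $\mathbb{Z}$, so that no artifact errors (Definition~\ref{artifact}) arise mod $p_i$ and the code mod $p_i$ already has distance $\geq d$. That is false: a syndrome is a symplectic product, a sum of up to $2n$ terms each of which is a product of a check entry (bounded by $B$) with an error entry (ranging up to $p_i-1$), so it can easily be a nonzero multiple of $p_i$---take two check entries equal to $1$ against error entries $v_1,v_2$ with $v_1+v_2=p_i$. Bounding the \emph{check entries} below $p_i$ does not bound the \emph{syndromes} below $p_i$; if it did, one could take $p^*=B$ in Theorem~\ref{ogproof}, contradicting the paper's remark that $p^*$ grows rapidly with $k$, $d$, and $q$. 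So your per-factor distance bound on the primes dividing $Q/q$ is unsupported, and the hypotheses give you only $p_i>B$, not $p_i>p^*$, there. The paper's case~(i) never needs a distance bound modulo $p_i$ alone: it uses $q\mid Q$ to pull any kernel element of a minor over $\mathbb{Z}_Q$ back to a kernel element over $\mathbb{Z}_q$ (where full rank on supports of size $<d$ is guaranteed by the original distance), and it spends the hypothesis $p_i>B$ only on preserving the additive order of the generators, hence $n$ and $k$---not on the distance. To repair your argument you would need to replace the per-factor analysis of the $Q/q$ part with that pull-back step. Separately, your lift from $\mathbb{Z}_{p_i}$ to $\mathbb{Z}_{p_i^{a_i}}$ is only sketched; the paper implicitly restricts to square-free $Q$ (as in Lemma~\ref{ldim}), and you should either do the same or actually supply that step.
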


Before proving this result, it has been shown before that for prime values such a $p^*$ exists and various bounds have been provided for it and $B$--both being functions of $k$, $d$, and $q$ for the original code \cite{gunderman2020local,gunderman2022degenerate}. $p^*$ generally grows rapidly, however, some cases where it remains at the initial local-dimension value have been shown, and $p^*$ is merely a bound not the value truly required.

\if{false}
\begin{proof}
Denote by $\phi_q(C)$ the symplectic representation for the code over local-dimension $q$. Let the composite local-dimension be given by $Q=qr$ with $q$ being the original local-dimension and $r$ being a different prime. Further let $r>p^*$ (the value for the prime case), for which $rank_r(M)=rank_q(M)$ for all matrix minors $M\subset \phi_q(C)$. For those minors that are full rank it means that the columns are linearly independent over $GF(q)$ and also $GF(r)$--while those that were not full rank, the size of the minimally independent set of columns match. And more generally considering the rank over integers this means that the entries of a linear combination cannot all be a multiple of their respective base--at least one entry is nonzero in the base.

Further, let us assume $r>B$ too. Assume a full rank submodule, then the linear combination of columns for a given submodule will then have its entries upperbounded by $(r-1)(q-1)+q\cdot \frac{B}{q}$, which means $[(r-1)-(q-1)]+(r-1)\cdot B$ $\leq r-1+B(r-1)$, which in turn is bounded by $r-1-B$, so having $-B+ r-1\neq qr$ as the first term is $-q\kappa$ which must be $\kappa<r$ which is true since $\kappa<B<r$. Then $r>B/q$ suffices which is more than covered when $r>p^*$.

For any submodule (when it was a matrix) full rank a priori adds $q[\leq \frac{B}{q}]$.
    Direct composite case. Doing a $p^*$ like argument but also using column l.c. since modules  

Sum of $\partial$ columns is given by $\partial (^< q+B)(r-1)\equiv ((r-1) ^< q -B)d$, which assuming $d<r$ and $d\neq q$, and using $r>B$. This expression cannot be a multiple of $qr$ as $(r-1)[^<q+B/r]$ has $|B/r|<1$ and $|q-^< q|\geq 1$ so the product cannot be congruent to $qr$ therefore at least one entry in the linear combination is still nonzero and the distance is preserved.

Still need to cover $\partial=q$ case, which likely is ok since $(r-1)$ factor.

A fair bit more polishing is needed, but I think roughly this argument holds.

$r>2d(q+B)/q$ likely just gives it, along with $r>p^*$. Need to show full rank basically for $q$ and $r$ cases. $(B/q)*q*(r-1)$
\end{proof}
\fi

\begin{proof}
Let $\phi_\infty(\mathcal{S}_\infty)$ be the symplectic representation of the LDI form used for the stabilizer code $\mathcal{S}$ over $q$ levels. Now we consider the possible impact on the distance upon selecting a new local-dimension $Q$. Note that in all cases the set of unavoidable errors will not change, while it is possible to introduce artifact errors which reduce the distance.

First, let $q\mid Q$ and all prime factors of $Q/q$ be larger than the largest (in absolute value) entry in $\phi_\infty(\mathcal{S}_\infty)$, denoted by $B$. Let $m$ be a matrix minor from $\phi_q(\mathcal{S})$. The distance of the code, restrained by the kernel of the code, can decrease due to the rank of $m$ decreasing upon selecting a new local-dimension. If $m$ is already not full rank it corresponds to either an unavoidable error or an artifact error due to the initial local-dimension $q$. If, however, $m$ is full rank, this means that all columns are linearly independent and so there is no way to take a linear combination of columns, over the integers, such that all columns are a nonzero multiple of $q$. Given this, when considering the underlying ring as over $\mathbb{Z}_Q$, with $q\mid Q$, it will still be impossible to take a linear combination of columns and obtain nonzero multiples of $Q$ in the syndrome. Lastly, we must avoid inadvertently altering the additive order of the parity checks, as otherwise the logical subspace's dimension will reduce. This is ensured so long as all prime factors of $Q/q$ are larger than $B$ so that no entry in $\phi_\infty(\mathcal{S}_\infty)$ can be a divisor of $Q$ (the entries will vary over $\mathbb{Z}_Q$ from $0$ to $B$ and $Q-B$ to $Q-1$, and hence are coprime with $Q/q$).

Second, let $q\nmid Q$ and all prime factors of $Q$ are larger than a cutoff value $p^*$, which is a function of $k$, $d$, and $q$ from the original code. In this case since each prime factor of $Q$ is larger than $p^*$ the rank of all matrix minors $m$ from $\phi_q(\mathcal{S}_\infty)$ when considered as over each prime factor's local-dimension, $\phi_p(\mathcal{S}_\infty)$ for $p\mid (Q/q)$, will not have a diminished rank. Therefore, by the same argument as above, the module corresponding to $m$ will also not have a diminished number of independent columns and thus cannot have the distance decreased.
\end{proof}
\if{false}
\begin{proof}
Let $\phi_q(C)$ be the symplectic representation for the code over local-dimension $q$. Further, let $Q=pq$ with $p$ and $q$ primes, with $p>p^*$ so that for all matrix minors $m$ from $\phi_\infty(C)$ the rank of $m$ modulo $q$ and modulo $p$ are always equal. Another interpretation of this condition is that if there are no linear combination of columns of $m$ which are congruent to $0$ modulo $q$, then there are also no linear combination of columns from $m$ which are congruent to $0$ modulo $p$.

Now we consider the case where the code's generators, in the symplectic representation, are over the ring $\mathbb{Z}_Q$. The entries in the symplectic representation are unchanged, however, the coefficients for linear combination of columns has changed to a ring--and thus a module. Let us consider a submodule such that when the operations are over the field $\mathbb{Z}_q$ it has full rank. Each such submodule has no set of columns which add to a vector which is precisely a multiple of $p$ or a multiple of $q$. Given this, it is not possible for a subset of columns to add to a multiple of $pq=Q$, and thus the number of independent columns cannot be decreased and so we cannot introduce any artifact errors from altering the underlying local-dimension from $\mathbb{Z}_q$ to $\mathbb{Z}_{pq}$.

Lastly, if $Q$ is no longer the product of merely two primes, while still being square-free, the same argument holds so long as all prime factors of $Q/q$ are larger than $p^*$.
\end{proof}
\fi

Composite local-dimension values being a ring and moreover a non-integral domain, provides for a somewhat atypical behavior. As we saw above with our two construction methods it is possible for the generators for the code to generate encodings of true composite local-dimension registers, but also we are able to effectively encode registers which are factors of the local-dimension value. The former of these is the typical scenario, whilst the second is a direct result of algebraic \textit{torsion}. We provide a more rigorous statement of how and when these result in the following lemma.

\begin{lemma}
Let $\mathcal{S}$ be a stabilizer code with local-dimension given by a ring $\mathcal{R}$. Further, let $\mathcal{L}$ be a generating set of logical operators, which are the kernels of the symplectic module $\mathcal{S}$ with the code quotiented out. Any member of $\mathcal{L}$ that has additive order not equal to the full size of $\mathcal{R}$ generates the torsion for the code. The additive order of each logical operator determines the logical dimension of that subspace.
\end{lemma}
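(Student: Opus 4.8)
The plan is to unpack the structure theory of finitely generated modules over $\mathbb{Z}_Q$ (or more generally over the finite commutative ring $\mathcal{R}$) applied to the logical quotient module, and then to identify the resulting cyclic summands with the torsion and with the dimensions of the logical subspaces. First I would make precise the object in question: the symplectic module $\phi_\infty(\mathcal{S})$ reduced mod $\mathcal{R}$ sits inside $\mathcal{R}^{2n}$, and the logical operators are representatives of the quotient of the symplectic-orthogonal complement $\mathcal{S}^{\perp}$ by $\mathcal{S}$ itself; call this quotient $\mathcal{L} = \mathcal{S}^{\perp}/\mathcal{S}$. This is a finitely generated module over the finite ring $\mathcal{R}$, so it decomposes as a direct sum of cyclic modules $\mathcal{R}/(a_j)$; when $\mathcal{R} = \mathbb{Z}_Q$ with $Q$ square-free, each $a_j$ is a divisor of $Q$ and the summand $\mathbb{Z}_Q/(a_j) \cong \mathbb{Z}_{Q/a_j}$ has additive order $Q/a_j$.

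Second I would tie this decomposition to the notion of torsion. The torsion submodule $T(\mathcal{L})$ consists of those elements killed by some nonzero element of $\mathcal{R}$; over $\mathbb{Z}_Q$ with $Q$ square-free, $\mathbb{Z}_Q \cong \prod_i \mathbb{Z}_{p_i}$ by CRT, and an element is torsion precisely when it is annihilated by some proper factor, i.e.\ when its additive order is strictly less than $|\mathcal{R}| = Q$. So a cyclic generator of order exactly $Q$ contributes a ``free-like'' register (a full $Q$-level logical register), whereas any generator of order $p_i^{c}$ with that order properly dividing $Q$ lies in the torsion and contributes a logical register of that reduced dimension. This is the content of the first two sentences of the lemma.

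Third, I would establish the dimension count. For each cyclic summand $\mathcal{R}/(a_j)$ the corresponding logical degree of freedom is a qudit of local-dimension equal to the additive order $o_j := |\mathcal{R}/(a_j)|$, since the Pauli-type logical operators $\bar{X}_j, \bar{Z}_j$ act on a space of exactly that many orthonormal states (this is the same trace-of-projector computation invoked for Lemma~\ref{ldim}, now applied summand by summand). Hence the total logical Hilbert-space dimension is $K = \prod_j o_j$, and the claim ``the additive order of each logical operator determines the logical dimension of that subspace'' follows directly; consistency with Lemma~\ref{ldim} is a sanity check, since $\prod_j o_j = Q^n / \prod_i p_i^{\bar k_i}$ after collecting factors.

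The main obstacle I anticipate is the bookkeeping needed to go from a \emph{generating set} $\mathcal{L}$ (which the lemma statement takes as given, and which need not be in direct-sum form) to the canonical cyclic decomposition, and in particular to make sure the symplectic pairing restricted to $\mathcal{L}$ is nondegenerate so that the logical operators genuinely pair up into $(\bar X, \bar Z)$ conjugate pairs of matching order; a naive generating set can have redundancies and mixed orders that obscure this. I would handle this by invoking the standard Smith-normal-form / invariant-factor argument over $\mathbb{Z}_Q$ to put $\mathcal{S}$ and $\mathcal{S}^{\perp}$ into a simultaneously ``diagonalized'' symplectic basis, after which the order of each surviving conjugate pair is read off from the invariant factors, and noting that the square-free hypothesis on $Q$ (used throughout this section) removes the higher $p$-power complications that would otherwise appear.
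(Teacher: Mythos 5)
Your proposal is correct, but it is considerably more elaborate than what the paper actually does: the paper gives no real proof of this lemma, disposing of it in a single sentence as being ``the definition of torsion for modules with the symplectic product,'' together with the remark that it is the foil of Lemma~\ref{ldim} (order of generators versus order of logical operators). Your route --- identifying $\mathcal{L}$ with $\mathcal{S}^{\perp}/\mathcal{S}$, decomposing it into cyclic summands via the invariant-factor/Smith-normal-form theory over $\mathbb{Z}_Q$ (with CRT and the square-free hypothesis reducing everything to the prime-field components), and then reading off each logical register's dimension as the additive order of its conjugate $(\bar X,\bar Z)$ pair via the trace-of-projector count --- supplies exactly the bookkeeping the paper omits, in particular the nontrivial step of passing from an arbitrary generating set to a symplectically paired basis in which orders are well defined. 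What your approach buys is an actual verification that the informal identification of ``torsion'' with ``additive order strictly less than $|\mathcal{R}|$'' is consistent (note that over a non-integral-domain one must take the annihilated-by-a-nonzero-element convention, as you do, since the non-zero-divisors of $\mathbb{Z}_Q$ are units and the stricter convention would render the torsion submodule trivial); what the paper's terseness buys is an honest acknowledgment that the statement is essentially definitional once the module-theoretic language is accepted. The one caveat is that your argument is really carried out for $\mathcal{R}=\mathbb{Z}_Q$ square-free rather than for an arbitrary finite commutative ring as the lemma is stated, but that matches the scope of the surrounding section and the paper itself does no better.
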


This result is the definition of torsion for modules with the symplectic product. In a way this is the foil of Lemma \ref{ldim}. In that Lemma the logical dimension is determined by the order of the generators, whereas here it is determined by the order of the logical operators. These are equivalent, however, in the infinite case one may be easier than the other.


While the above is most direct for composite local-dimensions, the result also holds for factors of the reals module a value ($\mathbb{R}_p$ for some real number $p$). For instance, if one has a CSS code with mixed local-dimensions $\mathbb{Z}\times \mathbb{R}_{2\pi}$ and the checks on the circle group domain are all nontrivial divisors of $2\pi$, considered over the reals, then a torsion logical group is generated and the continuous logical register that would result normally is transformed into a discrete logical register. This fact was first noted in the work \cite{vuillot2024homological} and further extended to composite CSS codes in \cite{novak2024homological}, although in both a more homological approach was taken, while here we take a ring theoretic approach, enabling analysis of stabilizer codes beyond those that are CSS in form. In a sense, one can consider the GKP encoding \cite{gottesman2001encoding} as having (only) a torsion group, albeit with a single stabilized qubit where the operation generates a torsion group over the tesselation of $\mathbb{R}_{2\pi}$.

Another variety of a commutative ring which is non-integral-domain is the ring of multiples of a number. We show that such codes also have their parameters preserved. In what follows we use the notation $m\mathbb{Z}_q$ to indicate multiples of $m$ multiplied by a number module $q$.

\begin{lemma}
Let $m\mathbb{Z}$ indicate the integral multiples of $m$, and let $C$ be a stabilizer code with parameters $[[n,k,d]]_q$. Then we may construct a new code with parameters $[[n,k,d']]_{m\mathbb{Z}}$, with $d'\geq d$, using the LDI construction.
\end{lemma}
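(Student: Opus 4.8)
The plan is to reduce this to the composite case (Theorem~\ref{pdist}) by exhibiting an explicit ring isomorphism $m\mathbb{Z} \cong \mathbb{Z}$ and tracking what it does to the symplectic representation. The key observation is that multiplication by $m$ gives a bijection $\mathbb{Z} \to m\mathbb{Z}$ that is additive and carries the generator $1$ to the generator $m$; although it is not a ring isomorphism in the unital sense (it does not preserve $1$), it is an isomorphism of the additive groups, and the symplectic inner product only uses the additive structure together with integer scaling. So the LDI form $\phi_\infty(\mathcal{S}_\infty)$ of the original $[[n,k,d]]_q$ code, whose entries are ordinary integers and whose generators have additive order $q$, can be reinterpreted verbatim as living in $(m\mathbb{Z})^{2n}$ by scaling every entry by $m$: the entry $a$ becomes $ma$, the generator order becomes the order of $m$ under multiples-of-$m$ arithmetic, and exact symplectic orthogonality $\phi_\infty(s_i)\odot\phi_\infty(s_j)=0$ is preserved because scaling every coordinate by $m$ scales each symplectic product by $m^2$, which is still $0$.

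First I would make precise what ``a stabilizer code with local-dimension $m\mathbb{Z}$'' means: the Pauli-type operators $X,Z$ now act on a register whose computational basis is indexed by $m\mathbb{Z}$ (equivalently, by $\mathbb{Z}$ after relabeling $j \leftrightarrow mj$), with $X|mj\rangle = |m(j+1)\rangle$ and $Z$ a corresponding phase, so that $\mathbb{P}_{m\mathbb{Z}}^n$ is homomorphically identified with $\mathbb{P}_\infty^n$ via $\phi_\infty$ exactly as in the $\phi_\infty$ definition. Then I would state the relabeling map $\iota: \mathbb{Z} \to m\mathbb{Z}$, $\iota(a)=ma$, note it is an additive-group isomorphism commuting with integer multiplication, and apply it coordinatewise to $\phi_\infty(\mathcal{S}_\infty)$ to obtain the claimed generators over $m\mathbb{Z}$. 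Commutation carries over by the scaling-by-$m^2$ remark; independence of generators carries over because $\iota$ is injective; and the logical-subspace count is unchanged because $\iota$ preserves additive orders, so Lemma~\ref{ldim} gives the same $K$ and hence the same $k$.

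For the distance I would argue that the kernel analysis underlying Theorems~\ref{ogproof} and~\ref{pdist} is insensitive to the relabeling: an undetectable error over $m\mathbb{Z}$ corresponds, under $\iota^{-1}$ applied coordinatewise, to an error over $\mathbb{Z}$ with the same weight (the support is unchanged since $\iota$ sends nonzero to nonzero), and the syndrome condition ``all symplectic products vanish'' transports exactly because every product scales by $m^2$. Thus the set of unavoidable errors and the set of artifact errors are in weight-preserving bijection with those of the original $\mathbb{Z}$-valued LDI code, and since the LDI code over $\mathbb{Z}$ has distance $\geq d$ (this is the $q=Q$, no new prime factors instance of Theorem~\ref{pdist}, or directly the defining property of the LDI form combined with Theorem~\ref{ogproof}), we conclude $d' \geq d$.

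The main obstacle I anticipate is purely definitional rather than computational: making rigorous the claim that a ``register'' indexed by $m\mathbb{Z}$ is a legitimate object in this framework at all, since $m\mathbb{Z}$ is a non-unital ring and the Pauli operators were originally defined via $\omega_q = e^{2\pi i/q}$ on a $q$-dimensional space. I would handle this by working entirely in the $\phi_\infty$ picture — where, as the excerpt already notes, the homomorphism property survives on $\mathbb{P}_\infty^n$ modulo phases — and treating $m\mathbb{Z}$ simply as the additive subgroup of $\mathbb{Z}$ that records the allowed power spectrum of the operators, so that the whole argument is an exercise in transporting the symplectic module along an additive isomorphism; the ``ring'' structure of $m\mathbb{Z}$ enters only through the symplectic product's use of integer multiplication, which the map $\iota$ intertwines up to the harmless global factor $m^2$.
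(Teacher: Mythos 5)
Your proposal is correct and rests on the same essential observation as the paper's (one-sentence) justification: the LDI form is \emph{exactly} symplectically orthogonal over $\mathbb{Z}$, so every artifact error of the original $[[n,k,d]]_q$ code acquires a nonzero integer syndrome, which stays nonzero over $m\mathbb{Z}$, leaving only the unavoidable errors (already of weight $\geq d$) undetectable. The extra scaffolding you add --- the additive isomorphism $\iota(a)=ma$, the $m^2$ scaling of the symplectic product, and the care about what an $m\mathbb{Z}$-indexed register means --- is a harmless (and arguably welcome) formalization of what the paper treats as immediate.
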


This result is immediate as all artifact errors from the code with local-dimension $q$ will have nonzero syndromes over $\mathbb{Z}$ and hence nonzero syndromes over $m\mathbb{Z}$, so the distance will remain , at least as large. The result also holds for $m\mathbb{Z}_p$ for $p>p^*$, which is isomorphic to the cyclic group over $p$ elements. Such local-dimensions can be of particular use for mode loss protection in systems such as photonic formats, other bosonic forms, and high-spin atoms with Zeeman dominated splitting \cite{michael2016new,gross2021designing,omanakuttan2024fault}. Such systems often suffer from noise which acts locally within the energy spectrum (between states with similar energies) due to Fermi's golden rule \cite{fermi1950nuclear,griffiths2019introduction}. This encoding, upon passing the leading number to the excitations in the codewords, would protect against up to $\lfloor\frac{m-1}{2}\rfloor$ excitation losses, acting as a Binomial encoding underneath a more traditional stabilizer code \cite{michael2016new}. 


\section{Mixed case}

In the prior section we showed one construction for true composite local-dimension codes by transforming them into an LDI form. Here we provide another method for constructing them by taking codes for local-dimension factors and protecting each factor against errors. To this end we begin with the following Lemma, which can be of use more generally for preserving symplectic relations in subspaces.



\begin{lemma}[Pick-and-mix]
Let the local-dimension of a system be given by a square-free number $Q=\prod_i p_i$. Suppose we have a set of Pauli operators over each local-dimension $p_i$, each set denoted $\mathcal{P}_i$, and which satisfy a given symplectic product matrix $[\odot(\mathcal{P}_i)]$. Then we may construct new Pauli operators $\mathcal{P}_i'$ such that the symplectic product values within each set are multiplied by a scalar, but between sets all operators commute. 
\end{lemma}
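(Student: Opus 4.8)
The plan is to realize each $\mathcal{P}_i$ inside $\mathbb{P}_Q^n$ by rescaling its $\phi$ representation with the Chinese Remainder Theorem idempotents, which is available precisely because $Q=\prod_i p_i$ is square-free, so $\mathbb{Z}_Q\cong\prod_i\mathbb{Z}_{p_i}$. First I would fix, for each index $i$, the complementary product $c_i:=Q/p_i=\prod_{j\neq i}p_j$ together with a multiplicative inverse $u_i$ of $c_i$ modulo $p_i$ (which exists since $\gcd(c_i,p_i)=1$), and set $e_i:=c_iu_i\bmod Q$. These are the orthogonal CRT idempotents: $e_i\equiv 1\pmod{p_i}$, $e_i\equiv 0\pmod{p_j}$ for $j\neq i$, and consequently $e_i^2\equiv e_i\pmod Q$, $e_ie_j\equiv 0\pmod Q$ for $i\neq j$, and $e_ip_i\equiv Qu_i\equiv 0\pmod Q$.

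Next I would \emph{define} the new sets through their symplectic images: lift each vector of $\phi_{p_i}(\mathcal{P}_i)$ to an integer vector and set $\phi_Q(\mathcal{P}_i'):=e_i\,\phi_{p_i}(\mathcal{P}_i)\bmod Q$, i.e.\ every $X$- and $Z$-power of each Pauli in $\mathcal{P}_i$ gets multiplied by $e_i$. Two quick checks would settle that this is meaningful. Well-definedness: altering a coordinate of the integer lift by a multiple of $p_i$ changes the scaled vector by a multiple of $e_ip_i\equiv 0\pmod Q$, so $\phi_Q(\mathcal{P}_i')$ depends only on $\phi_{p_i}(\mathcal{P}_i)$. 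Order preservation: a coordinate $a\not\equiv 0\pmod{p_i}$ is sent to $e_ia$, which is $\equiv a\pmod{p_i}$ and $\equiv 0\pmod{p_j}$ and hence has additive order exactly $p_i$ in $\mathbb{Z}_Q$, so the $\mathcal{P}_i'$ genuinely live in the order-$p_i$ torsion of $\mathbb{P}_Q^n$ (the torsion phenomenon of the preceding lemma at work).

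The heart of the argument is then only bilinearity of $\odot$. For $s,t\in\mathcal{P}_i$ with images $s',t'\in\mathcal{P}_i'$, one has $\phi_Q(s')\odot\phi_Q(t')\equiv e_i^2\bigl(\phi_{p_i}(s)\odot\phi_{p_i}(t)\bigr)\equiv e_i\bigl(\phi_{p_i}(s)\odot\phi_{p_i}(t)\bigr)\pmod Q$ using $e_i^2\equiv e_i$, so the within-set symplectic product matrix is exactly $[\odot(\mathcal{P}_i')]=e_i\,[\odot(\mathcal{P}_i)]\bmod Q$, that is, the original matrix times the scalar $e_i$. Since $e_i\equiv 1\pmod{p_i}$, such an entry is $0$ in $\mathbb{Z}_Q$ if and only if the original entry was $0$ in $\mathbb{Z}_{p_i}$, so the pattern of which pairs commute in $\mathcal{P}_i$ (indeed the full symplectic value mod $p_i$) is reproduced faithfully. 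For $s\in\mathcal{P}_i$ and $t\in\mathcal{P}_j$ with $i\neq j$, bilinearity together with $e_ie_j\equiv 0\pmod Q$ gives $\phi_Q(s')\odot\phi_Q(t')\equiv 0\pmod Q$, so all cross-set operators commute, as claimed.

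The construction is short once CRT is invoked, and I expect the only genuinely delicate point to be conceptual rather than computational: scaling \emph{both} operands of $\odot$ multiplies the symplectic value by $e_i^2$ rather than $e_i$, and it is exactly the idempotency $e_i^2\equiv e_i\pmod Q$ that rescues the statement by making $e_i$ act as the identity modulo $p_i$; one should also stay comfortable with the fact that $e_i$ is a zero divisor (not a unit) in $\mathbb{Z}_Q$, so ``multiplication by $e_i$'' is non-invertible, yet it still never collapses a noncommuting pair within a set because it remains a unit modulo $p_i$.
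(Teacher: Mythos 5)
Your proposal is correct and follows essentially the same route as the paper: the paper simply multiplies each $\mathcal{P}_i$ by $Q/p_i$ and observes that cross-set symplectic products land in multiples of $(Q/p_i)(Q/p_j)$, hence of $Q$, while within-set products are rescaled by a unit modulo $p_i$. Your only deviation is normalizing $Q/p_i$ to the CRT idempotent $e_i$, a harmless refinement that makes the within-set matrix reproduce exactly modulo $p_i$ rather than up to a unit scalar.
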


\begin{proof}
The entries in $[\odot(\mathcal{P}_i)]$ take values in $\mathbb{Z}_{p_i}$. We wish to keep the ratio between entries a constant, while ensuring that operators from $\mathcal{P}_j$ have a zero symplectic product over $Q$. Multiplying the operators in $\mathcal{P}_i$ with $Q/p_i$ turns the entries in the $\phi$ representation of each operator into values in $(Q/p_i)\mathbb{Z}_{p_i}$. We set this as our new $\mathcal{P}_i'$. Firstly this retains the ratio of the symplectic product, as well as the additive order of the elements. Secondly, consider the symplectic product between Paulis in the updated sets $\mathcal{P}_i'$ and $\mathcal{P}_j'$, which are given by values in $(Q/p_i)(Q/p_j)\mathbb{Z}_{p_i}\mathbb{Z}_{p_j}$, which notably must always be a multiple of $Q$ and so these operators commute. Then our new sets of operators are given by, abusing notation a little, $\mathcal{P}_i'=(Q/p_i)\mathcal{P}_i$.
\end{proof}

This generalized result is of somewhat limited use here as stabilizer codes require commuting operators, but this can easily be restricted to the case of each Pauli operator set consisting of solely commuting Pauli operators in their respective local-dimension. We may use the prior lemma to leverage codes from different local-dimensions to protect registers within different local-dimension factors. This then provides our second construction method for composite local-dimension codes:

\begin{lemma}
Let $Q=\prod_i p_i$ be a composite number and for each prime $p_i$ let $\mathcal{S}^{(p_i)}$ be a stabilizer code with parameters $[[n,k^{(p_i)},d^{(p_i)}]]_{p_i}$. Then we may construct a stabilizer code with parameters $[[n,\log_Q (Q^n/(\prod_i p_i^{n-k^{(p_i)}})),\min_i d^{(p_i)}]]_Q$.
\end{lemma}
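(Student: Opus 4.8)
The plan is to invoke the Pick-and-mix lemma directly. For each prime factor $p_i$ of $Q=\prod_i p_i$, take a minimal independent generating set of $\mathcal{S}^{(p_i)}$ -- it has $n-k^{(p_i)}$ elements in the $\phi$ representation -- and, following that lemma, replace each generator $g$ by $(Q/p_i)g$, now read as a vector in $\mathbb{Z}_Q^{2n}$. By the Pick-and-mix lemma the symplectic products inside the $i$-th family are merely rescaled, and remain $0$: a value that is $0 \bmod p_i$ becomes $0\bmod Q$ after multiplication by $(Q/p_i)^2$. Meanwhile any symplectic product between the $i$-th and $j$-th families is a multiple of $(Q/p_i)(Q/p_j)$ and hence of $Q$. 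So the union $\mathcal{S}:=\bigcup_i (Q/p_i)\,\mathcal{S}^{(p_i)}$ is a commuting set in $\mathbb{Z}_Q^{2n}$ and defines a stabilizer code on $n$ registers of local-dimension $Q$.

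Next I would read off the logical count. Each scaled generator $(Q/p_i)g$ has additive order exactly $p_i$ in $\mathbb{Z}_Q^{2n}$: some coordinate of $g$ is nonzero mod $p_i$, and $t(Q/p_i)\equiv 0\pmod Q$ forces $p_i\mid t$. Moreover the $n-k^{(p_i)}$ generators drawn from $\mathcal{S}^{(p_i)}$ stay independent -- reduce a vanishing integer combination modulo $p_i$, where $Q/p_i$ is a unit, and use independence of the original generators -- and families for different primes do not interfere. Thus Lemma \ref{ldim} applies with $\bar k_i=n-k^{(p_i)}$, giving logical dimension $K=Q^n/\prod_i p_i^{\,n-k^{(p_i)}}$, i.e. $k=\log_Q\!\big(Q^n/\prod_i p_i^{\,n-k^{(p_i)}}\big)$.

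For the distance I would pass through the Chinese Remainder Theorem. Under the coordinatewise isomorphism $\mathbb{Z}_Q^{2n}\cong\bigoplus_i\mathbb{Z}_{p_i}^{2n}$, the $i$-th CRT component of $(Q/p_i)g$ is a unit multiple of $g$, whereas its $j$-th component vanishes for $j\neq i$ because $p_j\mid Q/p_i$; hence $\mathcal{S}\cong\bigoplus_i\phi(\mathcal{S}^{(p_i)})$. Since the symplectic product respects CRT, the centralizer splits the same way, $\mathcal{C}(\mathcal{S})\cong\bigoplus_i\mathcal{C}(\mathcal{S}^{(p_i)})$. Therefore an undetectable error $v\leftrightarrow(v_1,\dots,v_m)$ has every $v_i$ in the $p_i$-centralizer, and $v\notin\mathcal{S}$ forces some component $v_{i^*}\notin\phi(\mathcal{S}^{(p_{i^*})})$, so $v_{i^*}$ is itself an undetectable error of $\mathcal{S}^{(p_{i^*})}$ and has weight at least $d^{(p_{i^*})}\geq\min_i d^{(p_i)}$. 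Because a register of $v$ is nonzero precisely when some $v_i$ is nonzero there, $\mathrm{wt}_Q(v)\geq\max_i\mathrm{wt}(v_i)\geq\min_i d^{(p_i)}$; conversely, embedding a minimum-weight undetectable error of the factor attaining $\min_i d^{(p_i)}$ into its own CRT slot and $0$ elsewhere yields an undetectable error of exactly that weight. Hence $d'=\min_i d^{(p_i)}$.

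The only genuinely delicate point, and where I expect to have to be careful, is the weight bookkeeping under CRT -- namely that the $\mathbb{Z}_Q$-support of a vector is the union of the $\mathbb{Z}_{p_i}$-supports of its CRT components -- since this is exactly what pins the distance at $\min_i d^{(p_i)}$ rather than something larger; everything else is routine module/ring algebra. I would also flag the degenerate case: a factor with $k^{(p_i)}=0$ contributes no logical content, and if this holds for every factor the code has trivial logical space and the distance statement is vacuous, which does not affect the generic claim.
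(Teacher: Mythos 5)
Your proposal is correct and follows essentially the same route as the paper, whose proof is only the one-line remark that the result ``follows from the pick-and-mix lemma, lemma \ref{ldim}, and taking the minimal distance.'' Your CRT decomposition of the stabilizer module and its centralizer is a faithful (and considerably more explicit) elaboration of exactly those three ingredients, including the weight bookkeeping that justifies ``taking the minimal distance.''
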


In the above the superscript $(p_i)$ is an index indicating variable. This follows from the pick-and-mix lemma, lemma \ref{ldim}, and taking the minimal distance. Additionally in construction all codes are to be the same length, $n$, however, one could repeat codes of shorter length as well to add up to the same total length. This construction effectively partitions the logical space into registers of prime factors, protecting against errors of the same order as that of the original code.

For instance, for a $Q=6$ system with $n=35$ one could use the traditional five qubit code repeated seven times (using identity on the other registers each time) along with a qutrit (LDI) Steane code repeated five times. In this case the stabilizer group has a total of $(2^4)^7(3^6)^5$ elements, forming a logical space of size $6^{35}/(2^{28}3^{30})$. This subspace of size $2^7 3^5$ corresponds to $7$ qubits and $5$ qutrits, however, their logical spaces are connected. The logical operators are blocks of five $X^3$ gates, corresponding to the logical $X$ operators within the five qubit code, and blocks of seven $X^2$ gates, corresponding to the logical $X$ operators within the Steane-like code. The logical $Z$ operators are defined likewise. These logical operators manifestly have additive orders $2$ and $3$ for the five register and Steane-like codes respectively. We also remark that this corresponds to $5+2\log_6(2)$ logical $Q=6$ registers. If one were to add two of the logical operators for the five qubit codes in this would become $5$ logical registers flat. 

We can also mix codes, upon being put into an LDI form, with local-dimensions such as protecting against bit-flips with $\mathbb{Z}$ local-dimension and phase-flips with $\mathbb{R}_{2\pi}$ local-dimension--sometimes called rotor codes.






\begin{example}
    As another example we consider the following stabilizer code with local-dimension $Q=58$ and $n=28$. The generators for this code are four copies of Steane codes with nonzero powers being $29$-th power of the $58$-level $X$ and $Z$ operators. Each Steane code has transversal Clifford. We also add the generators of the quantum Reed-Muller code with local-dimension $29$, with the operators being $2$-nd powers of the $58$-level $X$ and $Z$ operators \cite{campbell2012magic}. This has a transversal $29$-level analogue of the $T$ gate, along with qudit $SUM$ gate. This encodes a single logical $29$-level register and four qubit registers, with a distance of $2$ (due to the Reed-Muller code). While this may seem to have a transversal and universal gate set, unfortunately these gates are not universal over $58$-level systems. It may be possible to leverage code switching to generate universality, however, we leave that as a future direction.
\end{example}

Here we explored composite local-dimension quantum stabilizer codes, but as future work careful consideration of any differences from composite local-dimensions with square factors would be useful, which may result from unifying prime-power research and the composite research herein. Another future direction includes extending the results to all varieties of commutative rings, although solving cases where there are practical applications are most important.

\section*{Acknowledgments}

We thank Andrew Jena for useful comments which helped with expanding the conditions on Theorem \ref{pdist} and Victor Albert for pointing out the reference \cite{novak2024homological}.

\if{false}
\clearpage

\section{Submission of Papers for Review}

Papers in the form of a PDF file, formatted as described below, may be
submitted online as indicated on the webpage. 

The deadline for registering the paper and uploading the manuscript is \textbf{January 22, 2025} (anywhere on earth). \textbf{No extensions will be given.}

A paper's primary content is restricted in length to \textbf{5 pages}, but
authors are allowed an optional 6th page only containing references. The submission may contain a link to a longer, online version of the submitted manuscript, if the authors wish to include one.
 
The IEEEtran-conference style should be used as presented here. Submissions should use a font size of at least 10 points and have reasonable margins on all the 4 sides of the text.

Papers that are eligible for the student paper award must  include the comment ``THIS PAPER IS ELIGIBLE FOR THE STUDENT PAPER AWARD." as a first line in the abstract.

\section{Submission of Accepted Papers}

Accepted papers will be published in full. For papers that are eligible for the student paper award, please do not forget to remove the comment regarding the paper's eligibility from the abstract.

\section{Paper Format}

\subsection{Templates}

The paper (A4 or letter size, double-column format, not exceeding 5 pages plus an optional 6th page only containing references) should be formatted as shown in this sample \LaTeX{} file
\cite{Laport:LaTeX, GMS:LaTeXComp, oetiker_latex, typesetmoser}.

The use of Microsoft Word or other text processing systems instead of \LaTeX{} is strongly
discouraged. Users of such systems should attempt to duplicate the
style of this example, in particular the sizes and type of font, as
closely as possible.

\subsection{Formatting}

The style of references, equations, figures, tables, etc. should be
the same as for the \emph{IEEE Transactions on Information
  Theory}. 
  
  The source file of this template paper contains many more
instructions on how to format your paper. For example, example code for
different numbers of authors, for figures and tables, and references
can be found (they are commented out).

For instructions on how to typeset math, in particular for equation
arrays with broken equations, we refer to \cite{typesetmoser}.

Final papers should have no page numbers and no headers or footers (both will be added during the production of the proceedings).
The top and bottom margins should be at least 0.5 inches to leave room for page numbers.
All fonts should be embedded in the pdf file.

\subsection{PDF Requirements}

Only electronic submissions in form of a PDF file will be
accepted. The PDF file has to be PDF/A compliant. A common problem is
missing fonts. Make sure that all fonts are embedded. (In some cases,
printing a PDF to a PostScript file, and then creating a new PDF with
Acrobat Distiller, may do the trick.) More information (including
suitable Acrobat Distiller Settings) is available from the IEEE
website \cite{IEEE:pdfsettings, IEEE:AuthorToolbox}.

\section{Conclusion}

We conclude by pointing out that on the last page the columns need to
balanced. Instructions for that purpose are given in the source file (but are commented out).

Moreover, example code for an appendix (or appendices) can also be
found in the source file (they are commented out).


\section*{Acknowledgment}

We are indebted to Michael Shell for maintaining and improving
\texttt{IEEEtran.cls}. 

\fi

\bibliographystyle{IEEEtran}
\bibliography{main}

\end{document}